%%%%%%%%%%%%%%%%%%%% author.tex %%%%%%%%%%%%%%%%%%%%%%%%%%%%%%%%%%%
%
% sample root file for your "contribution" to a contributed volume
%
% Use this file as a template for your own input.
%
%%%%%%%%%%%%%%%% Springer %%%%%%%%%%%%%%%%%%%%%%%%%%%%%%%%%%

% RECOMMENDED %%%%%%%%%%%%%%%%%%%%%%%%%%%%%%%%%%%%%%%%%%%%%%%%%%%
\documentclass[graybox]{svmult}

% choose options for [] as required from the list
% in the Reference Guide

\usepackage{mathptmx}       % selects Times Roman as basic font
\usepackage{helvet}         % selects Helvetica as sans-serif font
\usepackage{courier}        % selects Courier as typewriter font
\usepackage{type1cm}        % activate if the above 3 fonts are
                            % not available on your system
%
\usepackage{makeidx}         % allows index generation
\usepackage{graphicx}        % standard LaTeX graphics tool
                             % when including figure files
\usepackage{multicol}        % used for the two-column index
\usepackage[bottom]{footmisc}% places footnotes at page bottom
\usepackage{amssymb}
\usepackage{amsfonts,amsmath}
% see the list of further useful packages
% in the Reference Guide

\makeindex             % used for the subject index
                       % please use the style svind.ist with
                       % your makeindex program

%%%%%%%%%%%%%%%%%%%%%%%%%%%%%%%%%%%%%%%%%%%%%%%%%%%%%%%%%%%%%%%%%%%%%%%%%%%%%%%%%%%%%%%%%

\begin{document}

\title*{Chiral properties of discrete Joyce and Hestenes  equations}
% Use \titlerunning{Short Title} for an abbreviated version of
% your contribution title if the original one is too long
\author{Volodymyr Sushch}
% Use \authorrunning{Short Title} for an abbreviated version of
% your contribution title if the original one is too long
\institute{Volodymyr Sushch \at Koszalin University of Technology, Sniadeckich 2,
 75-453 Koszalin, Poland \at \email{volodymyr.sushch@tu.koszalin.pl}}
%
% Use the package "url.sty" to avoid
% problems with special characters
% used in your e-mail or web address
%
\maketitle

\abstract*{This paper concerns the question of how chirality is realized for discrete counterparts of the Dirac-K\"{a}hler equation in the Hestenes and Joyce forms. It is shown that left and right chiral states for these discrete equations can be described with the aid of some projectors on a space of discrete forms. The proposed discrete model admits a chiral symmetry. We construct discrete analogs of spin operators, describe spin eigenstates for a discrete Joyce equation, and also discuss chirality.}

\abstract{This paper concerns the question of how chirality is realized for discrete counterparts of the Dirac-K\"{a}hler equation in the Hestenes and Joyce forms. It is shown that left and right chiral states for these discrete equations can be described with the aid of some projectors on a space of discrete forms. The proposed discrete model admits a chiral symmetry. We construct discrete analogues of spin operators, describe spin eigenstates for a discrete Joyce equation, and also discuss chirality.}

\keywords{Dirac-K\"{a}hler equation, Hestenes equation, Joyce equation, Chirality, Clifford product, Spin eigenstates}

 %\subjclass  {39A12, 39A70, 81Q05}

\section{Introduction}
\label{sec:1}
We present some recent results in the discretisation of the   Dirac  equation in the geometric algebra of spacetime by using the  Dirac-K\"{a}hler approach.
In this approach, a discretisation scheme is geometric in nature and rests upon the use of the differential forms calculus. The general topic of this paper is the description of some discrete constructions in which the chiral properties of the Dirac theory are captured. In the context of the geometric discretisation, it is natural to introduce a Clifford product acting on the space of discrete inhomogeneous forms as was discussed in \cite{S6}.
  This work is a direct continuation of  that described in my previous papers \cite{S1, S2, S3, S4, S5, S6}. In \cite{S2}, on the issue of chirality, special attention to a discrete Hodge star operator has been paid. A central role of the Hodge star to deal with chiral symmetry in the lattice formulation  was already pointed out by Rabin \cite{Rabin}.
 There are several approaches to study of discrete versions of the Dirac-K\"{a}hler equation based on the use of a discrete Clifford calculus framework on lattices. For a review of discrete Clifford analysis, we refer the reader to \cite{FKS, F2, F3, Kanamori, Vaz}.

We first briefly review some notations and basic facts  on the Dirac-K\"{a}hler equation \cite{Kahler, Rabin} and the Dirac equation
 in the spacetime algebra \cite{H1, H2}.
Let $M={\mathbb R}^{1,3}$ be  Minkowski space.
Denote by $\Lambda^r(M)$ the vector space of smooth complex-valued differential $r$-forms, $r=0,1,2,3,4$. 
 Let $d:\Lambda^r(M)\rightarrow\Lambda^{r+1}(M)$ be the exterior differential and let $\delta:\Lambda^r(M)\rightarrow\Lambda^{r-1}(M)$ be the formal adjoint of $d$  with respect to  natural inner  product in $\Lambda^r(M)$. We have $$\delta=\ast d\ast,$$
where $\ast$ is the Hodge star operator  $\ast:\Lambda^r(M)\rightarrow\Lambda^{4-r}(M)$ with respect to the Lorentz metric.
 Denote by $\Lambda(M)$ the set of all differential forms on $M$. We have
\begin{equation*}
\Lambda(M)=\Lambda^0(M)\oplus\Lambda^1(M)\oplus\Lambda^2(M)\oplus\Lambda^3(M)\oplus\Lambda^4(M).
\end{equation*}
Let $\Omega\in\Lambda(M)$
be an inhomogeneous differential form, i.e.,
$\Omega=\sum_{r=0}^4\overset{r}{\omega}$,
where $\overset{r}{\omega}\in\Lambda^r(M)$.
 The Dirac-K\"{a}hler equation for a free electron is given by
\begin{equation}\label{eq:01}
i(d+\delta)\Omega=m\Omega,
\end{equation}
where $i$ is the usual complex unit  and  $m$  is a mass parameter.

Let  $\{\gamma_0, \gamma_1, \gamma_2, \gamma_3\}$ be  a vector basis of the Clifford algebra $\emph{C}\ell(1,3)$, namely $\gamma_\mu\gamma_\nu+\gamma_\nu\gamma_\mu=g_{\mu\nu},$
where $g_{\mu\nu}=\mbox{diag}(1,-1,-1,-1)$ and $\mu, \nu=0,1,2,3$.
Hestenes \cite{H2} calls  this algebra the spacetime algebra.
It is known that the vectors $\gamma_\mu$ can be represented by the $4\times 4$ Dirac gamma matrices
 \cite{B, H2}.
Through the identification of the basic covectors $dx^\mu$ and the matrices $\gamma_\mu$ which arises from representation theory, one connects the differential forms under the Clifford product to the algebra of gamma matrices.
In other words, the graded algebra $\Lambda(M)$ endowed with the Clifford multiplication is an example of a Clifford algebra. It is true that Eq.~\eqref{eq:01} is equivalent to the four usual Dirac equations (traditional column-spinor equations).
Let $\Lambda_{\mathbb{R}}(M)$ denote the set of real-valued differential forms and let $\Lambda^{ev}(M)=\Lambda^0(M)\oplus\Lambda^2(M)\oplus\Lambda^4(M)$. The Dirac equation in the Hestenes form \cite{H1, H2} can be written in terms of inhomogeneous forms as

\begin{equation}\label{eq:02}
-(d+\delta)\Omega^{ev} \gamma_1\gamma_2=m\Omega^{ev}\gamma_0, \quad \Omega^{ev}\in\Lambda_{\mathbb{R}}^{ev}(M).
 \end{equation}
 We consider also the generalized bivector Dirac equation  \cite{J} in the form
 \begin{equation}\label{eq:03}
 i(d+\delta)\Omega^{ev}=m\Omega^{ev}\gamma_0, \quad \Omega^{ev}\in\Lambda^{ev}(M).
 \end{equation}
Following Baylis \cite{B1} we call Eq.~(\ref{eq:03}) the Joyce equation. This equation is equivalent to two copies of the usual Dirac equation. For a deeper discussion of equivalence of Dirac formulations we refer the reader to \cite{JM}.

The goal of this work is to establish the chirality of discrete versions of the Dirac equation in the  Hestenes and Joyce forms.    We show that defined some projectors on the space of discrete forms one can decompose  solutions of Eqs.~(\ref{eq:01})--(\ref{eq:03})  into its left-handed and right-handed parts. Two types of such projectors are introduced  and we prove that a discrete Dirac-K\"{a}hler operator flips the chirality for both of them.
We also construct spin $\pm \frac{1}{2}$ eigenstates for a discrete counterpart of the plane wave solution to a discrete Joyce equation and  discuss chirality for such fields.

\section{Discrete Dirac-K\"{a}hler, Hestenes and Joyce equations}
\label{sec:2}
% Always give a unique label
% and use \ref{<label>} for cross-references
% and \cite{<label>} for bibliographic references
% use \sectionmark{}
% to alter or adjust the section heading in the running head
In this section, we recall some discrete constructions concerning the  Dirac-K\"{a}hler equation and a discrete Clifford calculus.   A discretization scheme  is based on the language of differential forms and  is described in  \cite{S2}. The approach was originated by Dezin  \cite{Dezin}. For the convenience of the reader we briefly repeat the relevant material from \cite{S2}
without proofs, thus making our presentation self-contained.   All details one can  find in   \cite{S1, S2}.

Let $K(4)=K\otimes K\otimes K\otimes K$
be a cochain complex with  complex  coefficients,
where  $K$ is  the 1-dimensional complex generated by 0- and 1-dimensional basis elements   $x^{k_\mu}$  and $e^{k_\mu}$,  $k_\mu\in\mathbb{Z}$,  respectively.
Then an arbitrary $r$-dimensional basis element of $K(4)$ can be written as
$s^k_{(r)}=s^{k_0}\otimes s^{k_1}\otimes s^{k_2}\otimes s^{k_3}$, where
$s^{k_\mu}$ is either $x^{k_\mu}$ or $e^{k_\mu}$,\ $\mu=0,1,2,3$ and  $k=(k_0,k_1,k_2,k_3)$ is a multi-index. The symbol $(r)$ contains the whole required information about the number and position  $e^{k_\mu}\in K$ in
$s^k_{(r)}\in K(4)$.
For example, the 1-dimensional basis elements of
$K(4)$ can be written as
\begin{eqnarray*}
e^k_0=e^{k_0}\otimes x^{k_1}\otimes x^{k_2}\otimes x^{k_3},  \qquad
e^k_1=x^{k_0}\otimes e^{k_1}\otimes x^{k_2}\otimes x^{k_3}, \\
e^k_2=x^{k_0}\otimes x^{k_1}\otimes e^{k_2}\otimes x^{k_3},  \qquad
e^k_3=x^{k_0}\otimes x^{k_1}\otimes x^{k_2}\otimes e^{k_3}.
\end{eqnarray*}
The 2-dimensional basis elements of $K(4)$  have the form
\begin{eqnarray*}
e^k_{01}=e^{k_0}\otimes e^{k_1}\otimes x^{k_2}\otimes x^{k_3}, \ \  e^k_{02}=e^{k_0}\otimes x^{k_1}\otimes e^{k_2}\otimes x^{k_3}, \ \ 
e^k_{03}=e^{k_0}\otimes x^{k_1}\otimes x^{k_2}\otimes e^{k_3}, \\
e^k_{12}=x^{k_0}\otimes e^{k_1}\otimes e^{k_2}\otimes x^{k_3},  \ \  e^k_{13}=x^{k_0}\otimes e^{k_1}\otimes x^{k_2}\otimes e^{k_3}, \ \ 
e^k_{23}=x^{k_0}\otimes x^{k_1}\otimes e^{k_2}\otimes e^{k_3}.
\end{eqnarray*}
In the same way one can write down the 3-dimensional basic elements $e^k_{012}$, $e^k_{013}$, $e^k_{023}$ and $e^k_{123}$.
Finally, denote by
\begin{equation*}
x^k=x^{k_0}\otimes x^{k_1}\otimes x^{k_2}\otimes x^{k_3}, \qquad e^k=e^{k_0}\otimes e^{k_1}\otimes e^{k_2}\otimes e^{k_3}
\end{equation*}
 the 0- and 4-dimensional basis elements of $K(4)$.

The complex $K(4)$ is a discrete analogue of $\Lambda(M)$ and cochains play a role of differential
forms. Let us call them forms or discrete forms to emphasize their relationship with differential
forms.    Then we have
\begin{equation*}
K(4)=K^0(4)\oplus K^1(4)\oplus K^2(4)\oplus K^3(4)\oplus K^4(4),
\end{equation*}
where $K^r(4)$ denotes the set of all discrete $r$-forms, and
  any  $\overset{r}{\omega}\in K^r(4)$ can be expressed as
\begin{eqnarray}\label{eq:04}
\overset{0}{\omega}=\sum_k\overset{0}{\omega}_kx^k, \qquad \overset{2}{\omega}=\sum_k\sum_{\mu<\nu} \omega_k^{\mu\nu}e_{\mu\nu}^k,  \qquad  \overset{4}{\omega}=\sum_k\overset{4}{\omega}_ke^k, \\ \label{eq:05}
\overset{1}{\omega}=\sum_k\sum_{\mu=0}^3\omega_k^\mu e_\mu^k, \qquad
\overset{3}{\omega}=\sum_k\sum_{\iota<\mu<\nu} \omega_k^{\iota\mu\nu}e_{\iota\mu\nu}^k,
\end{eqnarray}
where  $\overset{0}{\omega}_k, \ \omega_k^{\mu\nu}, \ \overset{4}{\omega}_k, \  \omega_k^\mu$ and $\omega_k^{\iota\mu\nu}$ are complex numbers.

Let $d^c: K^r(4)\rightarrow K^{r+1}(4)$ be a discrete analogue of the exterior derivative $d$ and let $\delta ^c: K^r(4)\rightarrow K^{r-1}(4)$ be a discrete analogue of the codifferential $\delta$. It is clear that $\delta^c=\ast d^c\ast$. For more precise  definitions of these operators we refer the reader to  \cite{S2}. In this paper  we give only the difference
expressions for $d^c$ and  $\delta^c$.
Let the difference operator $\Delta_\mu$ be defined by
\begin{equation}\label{eq:06}
\Delta_\mu\omega_k^{(r)}=\omega_{\tau_\mu k}^{(r)}-\omega_k^{(r)},
\end{equation}
where  $\omega_k^{(r)}\in\mathbb{C}$ is a component of $\overset{r}{\omega}\in K^r(4)$ and
$\tau_\mu$ is   the shift operator  which acts  as
$\tau_\mu k=(k_0,...k_\mu+1,...k_3), \   \mu=0,1,2,3.$
For forms (\ref{eq:04}), (\ref{eq:05})  we have
\begin{eqnarray}\label{eq:07}
d^c\overset{0}{\omega}=\sum_k\sum_{\mu=0}^3(\Delta_\mu\overset{0}{\omega}_k)e_\mu^k,  \qquad d^c\overset{1}{\omega}=\sum_k\sum_{\mu<\nu}(\Delta_\mu\omega_k^\nu-\Delta_\nu\omega_k^\mu)e_{\mu\nu}^k,
\end{eqnarray}
\begin{eqnarray}\label{eq:08}
d^c\overset{2}{\omega}=\sum_k\big[(\Delta_0\omega_k^{12}-\Delta_1\omega_k^{02}+\Delta_2\omega_k^{01})e_{012}^k
+(\Delta_0\omega_k^{13}-\Delta_1\omega_k^{03}+\Delta_3\omega_k^{01})e_{013}^k \nonumber \\
+(\Delta_0\omega_k^{23}-\Delta_2\omega_k^{03}+\Delta_3\omega_k^{02})e_{023}^k
+(\Delta_1\omega_k^{23}-\Delta_2\omega_k^{13}+\Delta_3\omega_k^{12})e_{123}^k\big],
\end{eqnarray}
\begin{equation}\label{eq:09}
d^c\overset{3}{\omega}=\sum_k(\Delta_0\omega_k^{123}-\Delta_1\omega_k^{023}+\Delta_2\omega_k^{013}-\Delta_3\omega_k^{012})e^k,  \quad \qquad d^c\overset{4}{\omega}=0,
\end{equation}
\begin{equation}\label{eq:10}
\delta^c\overset{0}{\omega}=0, \quad \qquad \delta^c\overset{1}{\omega}=\sum_k(\Delta_0\omega_k^{0}-\Delta_1\omega_k^{1}-\Delta_2\omega_k^{2}-\Delta_3\omega_k^{3})x^k,
\end{equation}
\begin{eqnarray}\label{eq:11} \nonumber
\delta^c\overset{2}{\omega}=\sum_k\big[(\Delta_1\omega_k^{01}+\Delta_2\omega_k^{02}+\Delta_3\omega_k^{03})e_{0}^k
+(\Delta_0\omega_k^{01}+\Delta_2\omega_k^{12}+\Delta_3\omega_k^{13})e_{1}^k\\
+(\Delta_0\omega_k^{02}-\Delta_1\omega_k^{12}+\Delta_3\omega_k^{23})e_{2}^k
+(\Delta_0\omega_k^{03}-\Delta_1\omega_k^{13}-\Delta_2\omega_k^{23})e_{3}^k\big],
\end{eqnarray}
\begin{eqnarray}\label{eq:12} \nonumber
\delta^c\overset{3}{\omega}=\sum_k\big[(-\Delta_2\omega_k^{012}-\Delta_3\omega_k^{013})e_{01}^k+
(\Delta_1\omega_k^{012}-\Delta_3\omega_k^{023})e_{02}^k\\ \nonumber
+(\Delta_1\omega_k^{013}+\Delta_2\omega_k^{023})e_{03}^k
+(\Delta_0\omega_k^{012}-\Delta_3\omega_k^{123})e_{12}^k\\
+(\Delta_0\omega_k^{013}+\Delta_2\omega_k^{123})e_{13}^k
+(\Delta_0\omega_k^{023}-\Delta_1\omega_k^{123})e_{23}^k\big],
\end{eqnarray}
\begin{eqnarray}\label{eq:13}
\delta^c\overset{4}{\omega}=\sum_k\big[(\Delta_3\overset{4}{\omega}_k)e_{012}^k-(\Delta_2\overset{4}{\omega}_k)e_{013}^k
+(\Delta_1\overset{4}{\omega}_k)e_{023}^k+(\Delta_0\overset{4}{\omega}_k)e_{123}^k\big].
\end{eqnarray}
Let $\Omega\in K(4)$ be a discrete inhomogeneous, that is
\begin{equation}\label{eq:14}
\Omega=\sum_{r=0}^4\overset{r}{\omega},
\end{equation}
where $\overset{r}{\omega}\in K^r(4)$ is given by (\ref{eq:04}) and (\ref{eq:05}). A discrete analogue of the Dirac-K\"{a}hler equation (\ref{eq:01}) can be defined as
 \begin{equation}\label{eq:15}
i(d^c+\delta^c)\Omega=m\Omega.
\end{equation}
We can write this equation more explicitly by separating its homogeneous components as
\begin{eqnarray}\label{eq:16} \nonumber
i\delta^c\overset{1}{\omega}=m\overset{0}{\omega}, \quad i(d^c\overset{1}{\omega}+\delta^c\overset{3}{\omega})=m\overset{2}{\omega}, \quad
id^c\overset{3}{\omega}=m\overset{4}{\omega},\\
i(d^c\overset{0}{\omega}+\delta^c\overset{2}{\omega})=m\overset{1}{\omega}, \qquad
i(d^c\overset{2}{\omega}+\delta^c\overset{4}{\omega})=m\overset{3}{\omega}.
\end{eqnarray}
Substituting (\ref{eq:07})--(\ref{eq:13})  into (\ref{eq:16}) one  obtains the set of 16 difference equations   \cite{S2}.

As in \cite{S3},  we define  the Clifford multiplication of the basis elements $x^k$ and $e^k_\mu$,  \  $\mu=0,1,2,3$, by the following rules:
\begin{align*}
&\mbox{(a)} \quad x^kx^k=x^k, \quad x^ke^k_\mu=e^k_\mu x^k=e^k_\mu,\\
&\mbox{(b)} \quad e^k_\mu e^k_\nu+e^k_\nu e^k_\mu=2g_{\mu\nu}x^k, \quad g_{\mu\nu}=\mbox{diag}(1,-1,-1,-1),\\
&\mbox{(c)} \quad e^k_{\mu_1}\cdots e^k_{\mu_s}=e^k_{\mu_1\cdots \mu_s} \quad \mbox{for} \quad 0\leq \mu_1<\cdots <\mu_s\leq 3,
\end{align*}
supposing the product to be zero in all other cases.

The operation is linearly extended to arbitrary discrete forms.

Consider the following unit forms
\begin{equation}\label{eq:17}
x=\sum_kx^k, \qquad e=\sum_ke^k, \qquad e_\mu=\sum_ke_\mu^k, \qquad e_{\mu\nu}=\sum_ke_{\mu\nu}^k,
\end{equation}
where $\mu, \nu=0,1,2,3$.
Note that the unit 0-form $x$ plays  a role of the unit element in $K(4)$, i.e., for any $r$-form  $\overset{r}{\omega}$ we have $x\overset{r}{\omega}=\overset{r}{\omega}x=\overset{r}{\omega}$.
\begin{proposition}\label{1}
For the unit forms $x\in K^0(4)$ and $e_\mu\in K^1(4)$ given by (\ref{eq:17}) the following holds
\begin{equation}\label{eq:18}
e_\mu e_\nu+e_\nu e_\mu=2g_{\mu\nu}x, \qquad \mu,\nu=0,1,2,3.
\end{equation}
\end{proposition}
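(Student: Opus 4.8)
The plan is to reduce the identity for the unit forms to rule~(b) of the Clifford multiplication, applied index by index. First I would expand the left-hand side using the fact that the Clifford product is extended bilinearly to arbitrary discrete forms: from \eqref{eq:17},
\begin{equation*}
e_\mu e_\nu=\Big(\sum_k e_\mu^k\Big)\Big(\sum_j e_\nu^j\Big)=\sum_k\sum_j e_\mu^k e_\nu^j .
\end{equation*}
The crucial point is then to observe that every mixed term with $k\neq j$ vanishes: the multiplication rules (a)--(c) only assign nonzero values to products of basis elements carrying the \emph{same} multi-index $k$, and the product is declared to be zero in all other cases. Hence $e_\mu^k e_\nu^j=0$ whenever $k\neq j$, and the double sum collapses to the diagonal, $e_\mu e_\nu=\sum_k e_\mu^k e_\nu^k$. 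The same holds for $e_\nu e_\mu$.

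Next I would add the two expressions and apply rule~(b) term-wise:
\begin{equation*}
e_\mu e_\nu+e_\nu e_\mu=\sum_k\big(e_\mu^k e_\nu^k+e_\nu^k e_\mu^k\big)=\sum_k 2g_{\mu\nu}x^k=2g_{\mu\nu}\sum_k x^k=2g_{\mu\nu}\,x,
\end{equation*}
where the last equality uses the definition of the unit $0$-form $x$ in \eqref{eq:17}. This establishes \eqref{eq:18}.

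I do not expect a serious obstacle here; the statement is essentially a bookkeeping consequence of the locality of the Clifford product (products mixing different multi-indices vanish). The only point deserving care is the justification that the formal (possibly infinite) sums over $k$ may be rearranged and that the ``zero in all other cases'' clause indeed covers all cross terms $e_\mu^k e_\nu^j$ with $k\neq j$; once this is noted, the computation is immediate. One could also remark that the argument shows more generally that the map $\overset{r}{\omega}\mapsto\overset{r}{\omega}$ on same-index sectors turns $K(4)$ into a (fiberwise) Clifford algebra, with \eqref{eq:18} being the defining relations for the unit generators $e_\mu$ over the unit scalar $x$.
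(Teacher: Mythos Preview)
Your proposal is correct and follows exactly the approach the paper intends: the paper's own proof is the single line ``By the rule (b), it is obvious,'' and your argument simply makes explicit the two ingredients behind that sentence --- that cross terms $e_\mu^k e_\nu^j$ with $k\neq j$ vanish by the ``zero in all other cases'' clause, and that rule~(b) then applies term-wise to yield $\sum_k 2g_{\mu\nu}x^k=2g_{\mu\nu}x$.
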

\begin{proof}
\smartqed
By the rule (b), it is obvious.
\qed
\end{proof}
\begin{proposition}\label{2}
Let $\Omega\in K(4)$ be an inhomogeneous discrete form. Then we have
\begin{equation}\label{eq:19}
(d^c+\delta^c)\Omega=\sum_{\mu=0}^3e_\mu\Delta_\mu\Omega,
\end{equation}
where
 $\Delta_\mu$ is the difference operator which acts on each component of $\Omega$ by the rule~(\ref{eq:06}).
\end{proposition}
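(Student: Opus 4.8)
The plan is to verify the identity componentwise, matching the left-hand side $(d^c+\delta^c)\Omega$, computed from the explicit difference formulas \eqref{eq:07}--\eqref{eq:13}, against the right-hand side $\sum_{\mu}e_\mu\Delta_\mu\Omega$, computed from the Clifford multiplication rules (a)--(c) together with the definition \eqref{eq:06} of $\Delta_\mu$. Since both sides are linear in $\Omega$ and $\Omega=\sum_{r=0}^4\overset{r}{\omega}$, it suffices to check the identity on each homogeneous piece $\overset{r}{\omega}$ separately, and moreover, by linearity over the multi-index $k$, on each individual basis form. So the proof reduces to a finite bookkeeping exercise: for each of the $16$ basis types ($x^k$; the four $e^k_\mu$; the six $e^k_{\mu\nu}$; the four $e^k_{\iota\mu\nu}$; and $e^k$), compute $\sum_\mu e_\mu\Delta_\mu(\cdot)$ and compare with the corresponding lines of \eqref{eq:07}--\eqref{eq:13}.

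First I would record the key multiplication facts I need: by rule (b) with $\mu=\nu$ we have $e^k_\mu e^k_\mu = g_{\mu\mu}x^k$ (so $e^k_0e^k_0=x^k$, $e^k_je^k_j=-x^k$ for $j=1,2,3$), and by rule (c) together with the anticommutation (b), multiplying $e_\mu$ into a basis form either raises its degree (appending the index $\mu$, with a sign from reordering into increasing order) when $\mu$ is absent, or lowers it (removing $\mu$, again with a sign and a factor $g_{\mu\mu}$) when $\mu$ is present; the product vanishes only in the cases excluded by the rules, which do not arise here since every $e^k_\mu$ acts on a form built from the \emph{same} index $k$. Concretely, for a $1$-form $\overset{1}{\omega}=\sum_k\sum_\nu\omega_k^\nu e_\nu^k$ one gets
\begin{equation*}
\sum_\mu e_\mu\Delta_\mu\overset{1}{\omega}
=\sum_k\sum_{\mu,\nu}(\Delta_\mu\omega_k^\nu)\,e^k_\mu e^k_\nu,
\end{equation*}
and splitting $e^k_\mu e^k_\nu$ into the diagonal part ($\mu=\nu$, giving $g_{\mu\mu}x^k$) and the off-diagonal part ($\mu<\nu$ gives $e^k_{\mu\nu}$, $\mu>\nu$ gives $-e^k_{\mu\nu}$) reproduces exactly $\delta^c\overset{1}{\omega}$ from \eqref{eq:10} in the $x^k$-component and $d^c\overset{1}{\omega}$ from \eqref{eq:07} in the $e^k_{\mu\nu}$-component. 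The remaining degrees are handled the same way.

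I expect the main obstacle to be purely combinatorial sign discipline rather than anything conceptual: one must be scrupulous about the permutation sign incurred when $e_\mu$ is moved past existing factors to restore increasing index order in rule (c), and about the $g_{\mu\mu}=-1$ signs for the spatial indices; a single misplaced sign in, say, the $e^k_{013}$ or $e^k_{02}$ component would break the match with \eqref{eq:08}, \eqref{eq:12}. The cleanest way to organize this is to treat the "raising" contributions (which must assemble into $d^c\overset{r}{\omega}$) and the "lowering" contributions (which must assemble into $\delta^c\overset{r}{\omega}$) separately for each $r$, observing that for fixed $r$ the map $\overset{r}{\omega}\mapsto\sum_\mu e_\mu\Delta_\mu\overset{r}{\omega}$ lands in $K^{r-1}(4)\oplus K^{r+1}(4)$, with the two projections matching $\delta^c$ and $d^c$ respectively. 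Since the operators $d^c$, $\delta^c$ in \eqref{eq:07}--\eqref{eq:13} were themselves defined to be the discrete exterior derivative and codifferential, and the Clifford rules (a)--(c) are precisely the discrete transcription of $dx^\mu\wedge$ and contraction, this agreement is exactly what one should expect; the proof is just the verification that the bookkeeping conventions are consistent. Summing over all $r$ and using linearity then gives \eqref{eq:19}.
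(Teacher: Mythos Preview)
Your proposal is correct: the componentwise verification you outline, splitting $e_\mu\Delta_\mu$ into its degree-raising and degree-lowering parts and matching against \eqref{eq:07}--\eqref{eq:13}, is exactly the way to establish \eqref{eq:19}, and your sample computation for $\overset{1}{\omega}$ is right on both the diagonal ($\delta^c$) and off-diagonal ($d^c$) pieces. The paper itself does not carry out this verification here but simply cites Proposition~1 of \cite{S4}; your argument is presumably what that reference contains, and in any case it stands on its own.
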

\begin{proof}
\smartqed
See Proposition~1 in  \cite{S4}.
\qed
\end{proof}
Thus the discrete Dirac-K\"{a}hler equation (\ref{eq:15}) can be rewritten in the form
\begin{equation*}
i\sum_{\mu=0}^3e_\mu\Delta_\mu\Omega=m\Omega.
\end{equation*}
Let  $K^{ev}(4)=K^0(4)\oplus K^2(4)\oplus K^4(4)$ and let $\Omega^{ev}\in K^{ev}(4)$ be a real-valued even inhomogeneous form, i.e., $\Omega^{ev}=\overset{0}{\omega}+\overset{2}{\omega}+\overset{4}{\omega}$.  A discrete analogue of the Hestenes equation (\ref{eq:02}) is defined by
\begin{equation}\label{eq:20}
-(d^c+\delta^c)\Omega^{ev} e_1e_2=m\Omega^{ev} e_0,
\end{equation}
or equivalently,
\begin{equation*}
-\sum_{\mu=0}^3e_\mu\Delta_\mu\Omega^{ev}e_1e_2=m\Omega^{ev} e_0,
\end{equation*}
where $e_1, e_2$ and $e_0$ are given by  (\ref{eq:17}).
A discrete analogue of the Joyce equation (\ref{eq:03}) is given by
\begin{equation}\label{eq:21}
i(d^c+\delta^c)\Omega^{ev}=m\Omega^{ev} e_0,
\end{equation}
where $\Omega^{ev}\in K^{ev}(4)$ is a complex-valued even inhomogeneous form.
Clearly, Eq.~(\ref{eq:21}) can be rewritten in the form
\begin{equation*}
i\sum_{\mu=0}^3e_\mu\Delta_\mu\Omega^{ev}=m\Omega^{ev} e_0.
\end{equation*}
Applying (\ref{eq:07})--(\ref{eq:13}) Eqs.~(\ref{eq:20}) and (\ref{eq:21})   can be expressed also in terms of difference equations (see \cite{S3, S4}).

\section{Chirality and a discrete Joyce equation}
\label{sec:3}
In the continuum Dirac theory, the fifth gamma matrix $\gamma_5$ defined by $\gamma_5=i\gamma_0\gamma_1\gamma_2\gamma_3$ plays a central role in formulating chiral fermions.
It is known that in the language of differential forms the Hodge star operator $\ast$ has similar properties, up to sign, as $\gamma_5$.
The difficulties in defining a discrete Hodge star operator to deal with chirality  on the lattice  were  discussed by Rabin in \cite{Rabin}.
Several discrete versions of the Hodge star operator have been proposed in \cite{Beauce,S2,Watterson} in which the chiral properties for Dirac-K\"{a}hler fermions in the geometric discretisation are captured. In this section, we use a discrete analogue of $\gamma_5$ to describe the chirality of a discrete Dirac field in the Joyce formulation.

Consider the constant 4-form $e_5$ defined by
\begin{equation}\label{eq:22}
e_5 = ie_0e_1e_2e_3=ie,
\end{equation}
where $e_\mu\in K^1(4)$ and $e\in K^4(4)$ are given by (\ref{eq:17}). The form $e_5$ generates the action 
$e_5:  \overset{r}{\omega}\rightarrow e_5\overset{r}{\omega}$,
where $\overset{r}{\omega}\in K^r(4)$.  Note also that
\begin{equation*}
e_5:  K^{r}(4)\rightarrow K^{4-r}(4).
\end{equation*}
It is easy to check that
\begin{equation}\label{eq:23}
e_5^2 = x \quad  \mbox{and} \quad  e_5e_\mu=-e_\mu e_5 \quad \mbox{for} \quad \mu=0,1,2,3.
\end{equation}
Hence  the form $e_5\in K^4(4)$ has exactly the same properties as $\gamma_5$.
\begin{proposition}\label{3}
For any inhomogeneous form $\Omega\in K(4)$ we have
\begin{equation}\label{eq:24}
e_5(d^c+\delta^c)\Omega=-(d^c+\delta^c)e_5\Omega.
\end{equation}
\end{proposition}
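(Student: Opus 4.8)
The plan is to reduce everything to Proposition~\ref{2} together with the anticommutation relations (\ref{eq:23}). First I would invoke Proposition~\ref{2} to rewrite the left-hand side as
\[
e_5(d^c+\delta^c)\Omega = e_5\sum_{\mu=0}^3 e_\mu\Delta_\mu\Omega = \sum_{\mu=0}^3 (e_5 e_\mu)\,\Delta_\mu\Omega,
\]
where the last equality uses only that the Clifford product on $K(4)$ is associative and $\mathbb{C}$-bilinear, so that $e_5\big(e_\mu(\Delta_\mu\Omega)\big)=(e_5 e_\mu)(\Delta_\mu\Omega)$. By (\ref{eq:23}) we have $e_5 e_\mu=-e_\mu e_5$ for every $\mu$, hence
\[
e_5(d^c+\delta^c)\Omega = -\sum_{\mu=0}^3 e_\mu\,(e_5\Delta_\mu\Omega).
\]

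The essential point that remains is that the constant form $e_5$ commutes with each difference operator, i.e.\ $e_5\,\Delta_\mu\Omega=\Delta_\mu(e_5\Omega)$. This is true because $\Delta_\mu$ acts only on the scalar components $\omega_k^{(r)}\in\mathbb{C}$ via the index shift $k\mapsto\tau_\mu k$ (see (\ref{eq:06})), whereas left multiplication by $e_5=i\sum_k e^k$ acts separately within each fixed multi-index $k$: by the rules (a)--(c) the product $e^k s^k_{(r)}$ is again, up to a sign, a basis element carrying the \emph{same} multi-index $k$, and this signed relabelling of basis elements is identical for every $k$. Consequently multiplication by $e_5$ merely permutes the components (with signs) without disturbing their multi-indices, and therefore it commutes with the shift $\tau_\mu$, and hence with $\Delta_\mu$.

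Substituting $e_5\Delta_\mu\Omega=\Delta_\mu(e_5\Omega)$ and applying Proposition~\ref{2} once more, this time to the inhomogeneous form $e_5\Omega\in K(4)$, I obtain
\[
e_5(d^c+\delta^c)\Omega = -\sum_{\mu=0}^3 e_\mu\,\Delta_\mu(e_5\Omega) = -(d^c+\delta^c)e_5\Omega,
\]
which is exactly (\ref{eq:24}). The only step needing genuine care is the commutation $e_5\Delta_\mu=\Delta_\mu e_5$; the rest is immediate bookkeeping. (One could instead verify (\ref{eq:24}) directly by expanding $\Omega$ in its homogeneous parts and grinding through (\ref{eq:07})--(\ref{eq:13}) with $e_5\colon K^r(4)\to K^{4-r}(4)$, but the route via Proposition~\ref{2} is much shorter and more transparent.)
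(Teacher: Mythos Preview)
Your proof is correct and follows exactly the same approach as the paper, which simply states that the result follows from Proposition~\ref{2} and (\ref{eq:23}). Your write-up is in fact more careful than the paper's one-line proof, since you make explicit the point that the constant form $e_5$ commutes with each $\Delta_\mu$.
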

\begin{proof}
\smartqed
By Proposition~\ref{2} and (\ref{eq:23}), the equality (\ref{eq:24}) follows.
\qed
\end{proof}
Consider the following constant forms
\begin{equation}\label{eq:25}
P_L=\frac{x-e_5}{2}, \qquad  P_R=\frac{x+e_5}{2}.
\end{equation}
Since
\begin{equation*}
P_L^2=P_LP_L=P_L, \qquad  P_R^2=P_RP_R=P_R,
\end{equation*}
it follows that $P_L$ and $P_R$ are projectors. Let us represent $\Omega\in K(4)$ as
\begin{equation}\label{eq:26}
\Omega=\Omega_L+\Omega_R,
\end{equation}
where
\begin{equation}\label{eq:27}
\Omega_L=P_L\Omega, \qquad  \Omega_R=P_R\Omega.
\end{equation}
It is clear that $e_5\Omega_R=\Omega_R$ and $e_5\Omega_L=-\Omega_L$. Hence we can say that $\Omega$ decomposes into its self-dual and anti-self-dual parts with respect to the action $e_5$.
The self-dual and anti-self-dual components of $\Omega$  correspond to the chiral right and chiral left parts of a solution of the discrete Dirac-K\"{a}hler equation.
\begin{proposition}\label{4}
 If   $\Omega$ is a solution of the massless discrete Dirac-K\"{a}hler equation
 \begin{equation}\label{eq:28}
i(d^c+\delta^c)\Omega=0,
\end{equation}
  then so are both  $\Omega_R$ and $\Omega_L$.
 \end{proposition}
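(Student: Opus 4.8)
The plan is to apply the chirality projectors $P_L,P_R$ directly to the equation and to exploit Proposition~\ref{3}. Since $\Omega_R=P_R\Omega$ with $P_R=\frac{x+e_5}{2}$ a \emph{constant} form, and since $x$ is the unit of $K(4)$, I would first write
\begin{equation*}
(d^c+\delta^c)\Omega_R=\tfrac12(d^c+\delta^c)\Omega+\tfrac12(d^c+\delta^c)(e_5\Omega).
\end{equation*}
The only nontrivial ingredient is how $d^c+\delta^c$ interacts with left multiplication by the constant form $e_5$, and this is exactly Proposition~\ref{3}: $(d^c+\delta^c)(e_5\Omega)=-e_5(d^c+\delta^c)\Omega$. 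Substituting this in gives
\begin{equation*}
(d^c+\delta^c)\Omega_R=\tfrac12\bigl(x-e_5\bigr)(d^c+\delta^c)\Omega=P_L(d^c+\delta^c)\Omega,
\end{equation*}
and in the same way $(d^c+\delta^c)\Omega_L=P_R(d^c+\delta^c)\Omega$. In other words, the discrete Dirac--K\"ahler operator intertwines $P_R$ and $P_L$, i.e. it flips the chirality; this is the key step.

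Once this intertwining identity is in hand the conclusion is immediate: if $\Omega$ satisfies \eqref{eq:28}, that is $(d^c+\delta^c)\Omega=0$, then $(d^c+\delta^c)\Omega_R=P_L\cdot 0=0$ and $(d^c+\delta^c)\Omega_L=P_R\cdot 0=0$, so multiplying by $i$ shows that $\Omega_R$ and $\Omega_L$ are themselves solutions of \eqref{eq:28}. Combined with the decomposition \eqref{eq:26}--\eqref{eq:27}, this says that every solution of the massless equation splits into a chiral-left and a chiral-right solution.

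I do not expect a genuine obstacle here; the whole content is packaged in Proposition~\ref{3}. The only point requiring care is that $P_L$ and $P_R$ act by left multiplication whereas $\Delta_\mu$ acts componentwise, so one must use that $e_5$ (and $x$) are constant forms, hence $\Delta_\mu(e_5\Omega)=e_5\Delta_\mu\Omega$; this is exactly what makes Proposition~\ref{2} applicable and is already absorbed into the proof of Proposition~\ref{3}. It is worth noting that the argument uses nothing about the massless case beyond the vanishing of the right-hand side: for the massive equation the same computation yields $i(d^c+\delta^c)\Omega_R=mP_L\Omega=m\Omega_L$, which in general differs from $m\Omega_R$, and this is precisely why chirality fails to be preserved when $m\neq0$.
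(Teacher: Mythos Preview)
Your argument is correct and is essentially the paper's own proof: both use Proposition~\ref{3} (equation~\eqref{eq:24}) to compute $i(d^c+\delta^c)(\Omega\pm e_5\Omega)=i(d^c+\delta^c)\Omega\mp e_5\, i(d^c+\delta^c)\Omega=0$, which is exactly your intertwining identity $(d^c+\delta^c)P_{R/L}=P_{L/R}(d^c+\delta^c)$ applied to a solution of~\eqref{eq:28}. Your closing remark about the massive case is likewise what the paper records as equation~\eqref{eq:30} and Proposition~\ref{5}.
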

\begin{proof}
\smartqed
Let $\Omega\in K(4)$ be  a solution of Eq.(\ref{eq:28}). Using (\ref{eq:24}) and (\ref{eq:27}) we obtain
\begin{equation*}
i(d^c+\delta^c)(\Omega\pm e_5\Omega)=i(d^c+\delta^c)\Omega\mp e_5i(d^c+\delta^c)\Omega=0.
\end{equation*}
\qed
\end{proof}
From Proposition~\ref{4} it follows immediately that the massless  discrete Dirac-K\"{a}hler equation is invariant under the transformation
\begin{equation}\label{eq:29}
\Omega\longrightarrow\Omega\pm e_5\Omega.
\end{equation}
In other words, the discrete model admits the chiral symmetry (\ref{eq:29}) of Eq.~(\ref{eq:28}) with respect to the action $e_5$.
\begin{proposition}\label{5}
 If   $\Omega$ is a solution of the  discrete Dirac-K\"{a}hler equation (\ref{eq:15}) then we have
 \begin{eqnarray*} 
i(d^c+\delta^c)\Omega_L=m\Omega_R,  \\ i(d^c+\delta^c)\Omega_R=m\Omega_L. 
\end{eqnarray*}
 \end{proposition}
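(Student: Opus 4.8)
The plan is to derive the two asserted relations by applying the projectors $P_R$ and $P_L$ of \eqref{eq:25} to the discrete Dirac-K\"{a}hler equation \eqref{eq:15}, and then to commute these projectors through the operator $d^c+\delta^c$ using the anticommutation of $e_5$ recorded in Proposition~\ref{3}. The key observation is that, since $x$ is the unit of $K(4)$ and $e_5(d^c+\delta^c)\Omega=-(d^c+\delta^c)e_5\Omega$, left multiplication by $P_L$ (resp. $P_R$) can be pulled across $d^c+\delta^c$ at the cost of interchanging the two projectors.

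First I would establish this intertwining identity. A direct computation using \eqref{eq:24} gives
\begin{equation*}
P_L(d^c+\delta^c)\Omega=\tfrac{1}{2}\bigl[(d^c+\delta^c)\Omega+(d^c+\delta^c)e_5\Omega\bigr]=(d^c+\delta^c)P_R\Omega,
\end{equation*}
and, symmetrically, $P_R(d^c+\delta^c)\Omega=(d^c+\delta^c)P_L\Omega$. Here one uses only the $\mathbb{C}$-linearity of $d^c$ and $\delta^c$, distributivity of the Clifford product, and the fact that $x\Omega=\Omega$.

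Next I would apply $P_R$ to \eqref{eq:15}. Because $P_R$ is a constant form, its left multiplication commutes with the scalar $i$, so by the intertwining identity and \eqref{eq:27},
\begin{equation*}
m\Omega_R=m P_R\Omega=P_R\bigl(i(d^c+\delta^c)\Omega\bigr)=i(d^c+\delta^c)P_L\Omega=i(d^c+\delta^c)\Omega_L,
\end{equation*}
which is the first relation. Applying $P_L$ instead, and using $P_L(d^c+\delta^c)\Omega=(d^c+\delta^c)P_R\Omega$, yields $m\Omega_L=i(d^c+\delta^c)\Omega_R$, the second relation.

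I do not expect a genuine obstacle here: everything reduces to Proposition~\ref{3} together with associativity and $\mathbb{C}$-linearity of multiplication by the constant forms $x$ and $e_5$. The only point that needs a little care is the bookkeeping of chirality labels — one must keep track that $P_L$ intertwines with $P_R$ (and vice versa) when moved past $d^c+\delta^c$, which is exactly what produces the exchange $\Omega_L\leftrightarrow\Omega_R$ in the statement and reflects the chirality-flipping behaviour already noted in Proposition~\ref{3} and in the discussion around \eqref{eq:29}.
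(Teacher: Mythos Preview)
Your proposal is correct and follows essentially the same route as the paper: you first derive the intertwining relations $(d^c+\delta^c)P_L\Omega=P_R(d^c+\delta^c)\Omega$ and $(d^c+\delta^c)P_R\Omega=P_L(d^c+\delta^c)\Omega$ from Proposition~\ref{3}, and then apply the projectors to Eq.~\eqref{eq:15}. The only cosmetic difference is the direction of the computation (you project the right-hand side of \eqref{eq:15} and read off the left, while the paper starts from $i(d^c+\delta^c)\Omega_{L,R}$ and lands on the right), but the content is identical.
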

 \begin{proof}
 \smartqed
From (\ref{eq:24}) it follows that
\begin{equation}\label{eq:30}
(d^c+\delta^c)P_L\Omega=P_R(d^c+\delta^c)\Omega, \quad (d^c+\delta^c)P_R\Omega=P_L(d^c+\delta^c)\Omega
\end{equation}
for any $\Omega\in K(4)$.
Let $\Omega$ be a solution of Eq.~(\ref{eq:15}).
By (\ref{eq:30}), we have
\begin{equation*}
i(d^c+\delta^c)\Omega_L=i(d^c+\delta^c)P_L\Omega=P_R(m\Omega)=m\Omega_R
\end{equation*}
and
\begin{equation*}
i(d^c+\delta^c)\Omega_R=i(d^c+\delta^c)P_R\Omega=P_L(m\Omega)=m\Omega_L.
\end{equation*}
\qed
\end{proof}
Hence, just as in the continuum case,  the operator  $i(d^c+\delta^c)$ flips the chirality and the massive discrete Dirac-K\"{a}hler equation decomposes into two parts.

 Let $\Omega^{ev}\in K^{ev}(4)$ be a complex-valued even inhomogeneous form. Then we have
 \begin{equation*}
\Omega^{ev}=P_L\Omega^{ev}+P_R\Omega^{ev}=\Omega^{ev}_L+\Omega^{ev}_R,
\end{equation*}
where $P_L$ and $P_R$ are given by (\ref{eq:25}). The discrete Joyce equation splits into two parts in the following way.
\begin{proposition}\label{6}
 If   $\Omega^{ev}$ is a solution of the  discrete Joyce equation (\ref{eq:21}) then we have
  \begin{eqnarray}\label{eq:31}
i(d^c+\delta^c)\Omega^{ev}_L=m\Omega^{ev}_Re_0,  \\ i(d^c+\delta^c)\Omega^{ev}_R=m\Omega^{ev}_Le_0.\label{eq:32}
\end{eqnarray}
 \end{proposition}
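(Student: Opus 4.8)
The plan is to mimic the proof of Proposition~\ref{5}: apply the projectors $P_L$ and $P_R$ to both sides of the discrete Joyce equation~(\ref{eq:21}) and use the intertwining relations~(\ref{eq:30}). The one new feature, compared with Proposition~\ref{5}, is that the right-hand side of~(\ref{eq:21}) carries a right multiplication by the unit $1$-form $e_0$, so I first need to check that a left multiplication by the constant forms $P_L,P_R$ passes through that right factor.

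First I would record two preliminary observations. Since $P_L=(x-e_5)/2$ and $P_R=(x+e_5)/2$ are built from the unit forms $x\in K^0(4)$ and $e_5\in K^4(4)$, and since $e_5$ maps $K^r(4)$ into $K^{4-r}(4)$, both $P_L$ and $P_R$ map $K^{ev}(4)$ into itself; hence $\Omega^{ev}_L=P_L\Omega^{ev}$ and $\Omega^{ev}_R=P_R\Omega^{ev}$ again lie in $K^{ev}(4)$, so the left-hand sides of~(\ref{eq:31})--(\ref{eq:32}) are meaningful. Moreover, because the Clifford multiplication on $K(4)$ is associative, for any constant form $P$ one has $P(\Omega^{ev}e_0)=(P\Omega^{ev})e_0$; in particular $P_R(\Omega^{ev}e_0)=\Omega^{ev}_Re_0$ and $P_L(\Omega^{ev}e_0)=\Omega^{ev}_Le_0$.

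Now let $\Omega^{ev}$ solve~(\ref{eq:21}). Applying $P_R$ to both sides, commuting the scalar $i$, and using the first relation in~(\ref{eq:30}) gives
\begin{equation*}
i(d^c+\delta^c)\Omega^{ev}_L=i(d^c+\delta^c)P_L\Omega^{ev}=iP_R(d^c+\delta^c)\Omega^{ev}=P_R\bigl(m\Omega^{ev}e_0\bigr)=m\Omega^{ev}_Re_0,
\end{equation*}
which is~(\ref{eq:31}). Applying instead $P_L$ to both sides and using the second relation in~(\ref{eq:30}) yields
\begin{equation*}
i(d^c+\delta^c)\Omega^{ev}_R=i(d^c+\delta^c)P_R\Omega^{ev}=iP_L(d^c+\delta^c)\Omega^{ev}=P_L\bigl(m\Omega^{ev}e_0\bigr)=m\Omega^{ev}_Le_0,
\end{equation*}
which is~(\ref{eq:32}).

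I do not expect a genuine obstacle: the whole argument reduces to the operator identity~(\ref{eq:30}) (itself a consequence of Proposition~\ref{3}) together with associativity of the Clifford product. The only point requiring a little care is the bookkeeping for the right factor $e_0$ — the projectors must be applied on the \emph{left}, so that they commute with $d^c+\delta^c$ through~(\ref{eq:30}) and with the right multiplication by $e_0$ through associativity; applying them on the right would entangle $e_0$ with $e_5$ (recall $e_5e_\mu=-e_\mu e_5$) and spoil the splitting.
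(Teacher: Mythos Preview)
Your argument is correct and is exactly the approach the paper takes: the paper's own proof simply says that it repeats the proof of Proposition~\ref{5}, which is precisely the intertwining relations~(\ref{eq:30}) applied to~(\ref{eq:21}). Your extra remarks about $P_L,P_R$ preserving $K^{ev}(4)$ and associativity handling the right factor $e_0$ just make explicit what the paper leaves implicit.
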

 \begin{proof}
 \smartqed
The proof is the same as that for Proposition~\ref{5}.
\qed
\end{proof}
Thus the chiral properties are captured for our discrete model.

\section{Chirality and a discrete Hestenes equation}
\label{sec:4}
Recall that the Hestenes equation is a form of the Dirac equation in the real algebra $\emph{C}\ell_{\mathbb{R}}(1,3)$. The discrete Hestenes equation acts in the space of real-valued even form $K^{ev}(4)$.
Unfortunately, to discus the chiral properties of this equation the action (\ref{eq:22}) makes no sense because the form $e_5$ defined by (\ref{eq:22}) is complex-valued. To make sense of the chiral action one must substitute for $e_5$  a real-valued action.  Let us denote by $\ast_5$ the following transformation
\begin{equation}\label{eq:33}
\ast_5:  \overset{r}{\omega}\rightarrow e\overset{r}{\omega} e_2e_1,
\end{equation}
where $\overset{r}{\omega}\in K^r(4)$ and  $e, e_2, e_1$ are given by (\ref{eq:17}). It is true that $\ast_5:  K^{ev}(4)\rightarrow K^{ev}(4)$.

\begin{proposition}\label{7}
For any   inhomogeneous form $\Omega\in K(4)$ we have
\begin{equation}\label{eq:34}
(\ast_5)^2\Omega=\Omega, \quad \mbox{and} \quad (\ast_5e_\mu+e_\mu \ast_5)\Omega=0 \quad \mbox{for} \quad \mu=0,1,2,3.
\end{equation}
\end{proposition}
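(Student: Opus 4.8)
The plan is to reduce both claims in (\ref{eq:34}) to two elementary consequences of (\ref{eq:23}), together with associativity of the Clifford product and the fact that $x$ is a two-sided unit on all of $K(4)$. Since $e_5=ie$, the relation $e_5^2=x$ gives $e^2=-x$, and the relation $e_5e_\mu=-e_\mu e_5$, together with the fact that scalars commute with everything, gives $ee_\mu=-e_\mu e$ for $\mu=0,1,2,3$. One could equally well obtain both directly from rules (a)--(c) using $e=e_0e_1e_2e_3$ and $e_\mu e_\nu=-e_\nu e_\mu$ for $\mu\neq\nu$, but passing through (\ref{eq:23}) is shorter.

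For $(\ast_5)^2\Omega=\Omega$, I would unfold the definition (\ref{eq:33}) twice, $(\ast_5)^2\Omega=\ast_5(e\Omega e_2e_1)=e(e\Omega e_2e_1)e_2e_1$, and regroup by associativity as $e^2\,\Omega\,(e_2e_1e_2e_1)$. Here $e^2=-x$ as above, while $e_2e_1e_2e_1=-e_2^2e_1^2=-x$, using $e_1e_2=-e_2e_1$ (rule (b) with $g_{12}=0$), then $e_1^2=e_2^2=-x$ (rule (b) with $g_{11}=g_{22}=-1$) and $x^2=x$. Therefore $(\ast_5)^2\Omega=(-x)\,\Omega\,(-x)=x\Omega x=\Omega$.

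The anticommutation relation is then immediate: reading $\ast_5 e_\mu+e_\mu\ast_5$ as a composition of operators and using associativity (left multiplication by $e$ and by $e_\mu$ commute with right multiplication by $e_2e_1$), one gets $\ast_5(e_\mu\Omega)+e_\mu(\ast_5\Omega)=(ee_\mu+e_\mu e)\,\Omega\,e_2e_1=0$ for every $\Omega\in K(4)$.

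The only point that genuinely needs checking is that these regroupings are legitimate, i.e. that the Clifford product on $K(4)$ is associative and that $x$ is a bona fide two-sided identity on all of $K(4)$ and not merely on homogeneous forms. Both follow at once from the pointwise character of rules (a)--(c): since products of basis elements attached to distinct multi-indices vanish, $K(4)$ is a direct sum of copies of the Clifford algebra $\emph{C}\ell(1,3)$, in which associativity and the unit property are standard; they have moreover already been used implicitly in the preceding propositions. Beyond that, the argument is pure sign bookkeeping, so I do not anticipate any real obstacle.
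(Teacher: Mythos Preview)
Your proof is correct and follows essentially the same route as the paper: unfold the definition of $\ast_5$ twice and use $e^2=-x$ together with $(e_2e_1)^2=-x$ to obtain $(\ast_5)^2\Omega=\Omega$, then invoke $ee_\mu=-e_\mu e$ for the anticommutation. The only differences are cosmetic --- you derive $e^2=-x$ and $ee_\mu=-e_\mu e$ from (\ref{eq:23}) rather than directly from the Clifford rules, and you spell out the $(e_2e_1)^2=-x$ step that the paper leaves implicit.
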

\begin{proof}
 \smartqed
By definition, $e=e_0e_1e_2e_3$   and  $e^2=ee=-x$.  Then for any $\overset{r}{\omega}\in K^r(4)$ we have
\begin{equation*}
(\ast_5)^2\overset{r}{\omega}=\ast_5(\ast_5\overset{r}{\omega})= e(e\overset{r}{\omega}e_2e_1)e_2e_1=x\overset{r}{\omega}x=\overset{r}{\omega}.
\end{equation*}
Since $e\in K^4(4)$ anticommutes with $e_\mu\in K^1(4)$ for $\mu=0,1,2,3$, i.e., $ee_\mu=-e_\mu e$,  the second equality of  (\ref{eq:34}) follows immediately.
\qed
\end{proof}

\begin{proposition}\label{8}
Let $\Omega\in K(4)$ be an inhomogeneous form. Then  the following holds
\begin{equation}\label{eq:35}
(\ast_5(d^c+\delta^c)+(d^c+\delta^c)\ast_5)\Omega=0.
\end{equation}
\end{proposition}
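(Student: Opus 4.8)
The plan is to reduce the claimed anticommutation relation to the operator identity in Proposition~\ref{2}, exactly as Proposition~\ref{3} was derived. First I would invoke Proposition~\ref{2} to write $(d^c+\delta^c)\Omega=\sum_{\mu=0}^3 e_\mu\Delta_\mu\Omega$, and note that $\Delta_\mu$ commutes with left \emph{and} right multiplication by any constant form, since it acts componentwise by the shift rule~(\ref{eq:06}) and the constant forms $e$, $e_1$, $e_2$ have $k$-independent coefficients. Thus $\ast_5(d^c+\delta^c)\Omega=\sum_\mu e\,(e_\mu\Delta_\mu\Omega)\,e_2e_1=\sum_\mu (e e_\mu)\,\Delta_\mu\Omega\,e_2e_1$, and similarly $(d^c+\delta^c)\ast_5\Omega=\sum_\mu e_\mu\,\Delta_\mu(e\Omega e_2e_1)=\sum_\mu (e_\mu e)\,\Delta_\mu\Omega\,e_2e_1$.

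The second step is to use the anticommutation $ee_\mu=-e_\mu e$ for $\mu=0,1,2,3$, already established in the proof of Proposition~\ref{7}, which follows from rule~(b) together with the fact that $e=e_0e_1e_2e_3$ is a product of four generators. Then $(ee_\mu)\,\Delta_\mu\Omega\,e_2e_1=-(e_\mu e)\,\Delta_\mu\Omega\,e_2e_1$ termwise, so the two sums are negatives of each other and their sum vanishes. This gives~(\ref{eq:35}) directly. Equivalently, one can phrase it more abstractly: $\ast_5$ is the composition of left multiplication by $e$ and right multiplication by $e_2e_1$; right multiplication by anything automatically commutes with $(d^c+\delta^c)$ by Proposition~\ref{2} (the $e_\mu$ sit on the left), while left multiplication by $e$ anticommutes with $(d^c+\delta^c)$ because $e$ anticommutes with each $e_\mu$ — this last point is the precise analogue of Proposition~\ref{3}, just with the real 4-form $e$ in place of $e_5=ie$, and indeed $\ast_5\Omega$ and $e_5\Omega$ differ only by the right factor and the scalar $i$.

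The only genuinely substantive point — and the one I would state carefully rather than wave through — is the commutation of $\Delta_\mu$ with right multiplication by the \emph{constant} form $e_2e_1$. Since the Clifford product is bilinear and is applied componentwise in the multi-index $k$ (the rules (a)--(c) never mix different $k$), and since $e_2e_1=\sum_k e_{2}^k e_{1}^k$ has coefficients independent of $k$, we have $\Delta_\mu(\Omega\, e_2 e_1)=(\Delta_\mu\Omega)\,e_2e_1$; the same applies to left multiplication by $e$. Everything else is the bookkeeping already done in Propositions~\ref{2}, \ref{3} and~\ref{7}, so no new obstacle arises.

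\begin{proof}
\smartqed
By Proposition~\ref{2}, $(d^c+\delta^c)\Omega=\sum_{\mu=0}^3 e_\mu\Delta_\mu\Omega$. Since the difference operator $\Delta_\mu$ acts componentwise by~(\ref{eq:06}) and the Clifford product is applied componentwise in the multi-index, $\Delta_\mu$ commutes with both left multiplication by the constant form $e\in K^4(4)$ and right multiplication by the constant form $e_2e_1$. Hence
\begin{equation*}
\ast_5(d^c+\delta^c)\Omega=e\Big(\sum_{\mu=0}^3 e_\mu\Delta_\mu\Omega\Big)e_2e_1=\sum_{\mu=0}^3 (ee_\mu)(\Delta_\mu\Omega)e_2e_1,
\end{equation*}
and, using again Proposition~\ref{2} applied to $\ast_5\Omega=e\Omega e_2e_1$,
\begin{equation*}
(d^c+\delta^c)\ast_5\Omega=\sum_{\mu=0}^3 e_\mu\Delta_\mu(e\Omega e_2e_1)=\sum_{\mu=0}^3 (e_\mu e)(\Delta_\mu\Omega)e_2e_1.
\end{equation*}
As shown in the proof of Proposition~\ref{7}, $e=e_0e_1e_2e_3$ anticommutes with each $e_\mu$, i.e. $ee_\mu=-e_\mu e$ for $\mu=0,1,2,3$. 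Therefore the two sums above are negatives of each other termwise, and adding them gives~(\ref{eq:35}).
\qed
\end{proof}
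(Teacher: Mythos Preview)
Your proof is correct and follows essentially the same approach as the paper's: the paper simply writes ``By~(\ref{eq:34}), the proof repeats the proof of Proposition~\ref{3},'' which amounts to combining Proposition~\ref{2} with the anticommutation of $\ast_5$ and $e_\mu$ established in Proposition~\ref{7}. You have just unpacked these steps explicitly, including the commutation of $\Delta_\mu$ with multiplication by constant forms, which the paper leaves implicit.
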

\begin{proof}
 \smartqed
By  (\ref{eq:34}),  the proof repeats the proof of  Proposition~\ref{3}.
\qed
\end{proof}
From (\ref{eq:34}) and (\ref{eq:35}) it follows that to deal with chirality in the case of the discrete Hestenes equation   one can take $\ast_5$.
\begin{proposition}\label{9}
The massless  discrete Dirac-K\"{a}hler equation is invariant under the transformation
\begin{equation}\label{eq:36}
\Omega\longrightarrow\Omega\pm \ast_5\Omega.
\end{equation}
\end{proposition}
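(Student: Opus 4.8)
The plan is to mimic the proof of Proposition~\ref{4}, trading the complex action $e_5$ for the real-valued action $\ast_5$ and invoking Proposition~\ref{8} in place of Proposition~\ref{3}. First I would fix an arbitrary solution $\Omega\in K(4)$ of the massless discrete Dirac-K\"{a}hler equation (\ref{eq:28}), so that $i(d^c+\delta^c)\Omega=0$, and apply the operator $i(d^c+\delta^c)$ to $\Omega\pm\ast_5\Omega$.

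By linearity of $d^c$, $\delta^c$ and $\ast_5$ this expands as
\begin{equation*}
i(d^c+\delta^c)(\Omega\pm\ast_5\Omega)=i(d^c+\delta^c)\Omega\pm i(d^c+\delta^c)\ast_5\Omega .
\end{equation*}
The first term vanishes by assumption. For the second, Proposition~\ref{8} gives $(d^c+\delta^c)\ast_5\Omega=-\ast_5(d^c+\delta^c)\Omega$; since $\ast_5$ is left multiplication by the constant form $e$ followed by right multiplication by $e_2e_1$, it is $\mathbb{C}$-linear and thus commutes with the scalar $i$, so the second term equals $\mp\,\ast_5\big(i(d^c+\delta^c)\Omega\big)=0$. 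Therefore $i(d^c+\delta^c)(\Omega\pm\ast_5\Omega)=0$, i.e. $\Omega\pm\ast_5\Omega$ again solves (\ref{eq:28}), which is precisely the invariance (\ref{eq:36}).

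I do not anticipate a genuine obstacle: the substance — that $\ast_5$ is an involution and anticommutes with every $e_\mu$, hence anticommutes with $d^c+\delta^c$ through the representation (\ref{eq:19}) of Proposition~\ref{2} — is already contained in Propositions~\ref{7} and~\ref{8}. The only points deserving a word of care are bookkeeping ones: that (\ref{eq:33}) really defines a $\mathbb{C}$-linear map $K(4)\to K(4)$, so that $\Omega\pm\ast_5\Omega$ is a bona fide inhomogeneous form, and that $\Delta_\mu$ commutes with $\ast_5$ because $e$ and $e_2e_1$ have index-independent coefficients (this is exactly what makes Proposition~\ref{8} available). In parallel with (\ref{eq:25})--(\ref{eq:27}) one could moreover record the projectors $\tfrac12(x\pm\ast_5)$, well defined since $(\ast_5)^2\Omega=\Omega$ by (\ref{eq:34}), and restate the conclusion as a splitting of $\Omega$ into $\ast_5$-eigenparts; this is not needed for the statement as phrased, but it is the natural setup for the Hestenes counterpart of Proposition~\ref{5}.
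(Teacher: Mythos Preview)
Your proof is correct and follows exactly the paper's approach: the paper's own proof is the single line ``By (\ref{eq:35}), it is obvious,'' and what you have written is precisely the unpacking of that sentence, mirroring the argument of Proposition~\ref{4} with $\ast_5$ in place of $e_5$ and Proposition~\ref{8} in place of Proposition~\ref{3}. The additional bookkeeping remarks you include (on $\mathbb{C}$-linearity of $\ast_5$ and commutation of $\Delta_\mu$ with constant-coefficient multiplications) are harmless and accurate, though the paper does not spell them out.
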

\begin{proof}
 \smartqed
By (\ref{eq:35}),  it is obvious.
\qed
\end{proof}
It follows that the discrete model admits a chiral symmetry of the type  (\ref{eq:36}).

Let us consider the following operations
\begin{equation}\label{eq:37}
P^*_L=\frac{1-\ast_5}{2},\qquad P^*_R=\frac{1+\ast_5}{2}.
\end{equation}
It is easy to check that
\begin{equation*}
(P^*_L)^2\Omega=P^*_L\Omega,\qquad (P^*_R)^2\Omega=P^*_R\Omega, \qquad P^*_LP^*_R\Omega=P^*_RP^*_L\Omega=0
\end{equation*}
for any $\Omega\in K(4)$.
Hence, the operations  $P^*_L$ and $P^*_R$  are projectors. Then $\Omega\in K(4)$  can be represented as (\ref{eq:26}),
where
\begin{equation*}
\Omega_L=P^*_L\Omega, \qquad  \Omega_R=P^*_R\Omega.
\end{equation*}
Let $\Omega^{ev}\in K^{ev}(4)$ be a real-valued even inhomogeneous form. Then the forms
$\Omega^{ev}_L=P^*_L\Omega^{ev}$ and  $\Omega^{ev}_R=P^*_R\Omega^{ev}$ are even and we have
\begin{equation*}
\Omega^{ev}=\Omega^{ev}_L+\Omega^{ev}_R.
\end{equation*}
It should be noted that  $\Omega^{ev}_R$ and $\Omega^{ev}_L$ are self-dual and  anti-self-dual parts of $\Omega^{ev}$  with respect to the action $\ast_5$. They correspond to the chiral right and chiral left parts of a solution of the discrete Hestenes equation. Similarly, as in the case of the Joyce equation, we have the following decomposition of the discrete Hestenes equation.
\begin{proposition}\label{10}
 If   $\Omega^{ev}$ is a solution of the  discrete Hestenes equation (\ref{eq:20}) then we have
  \begin{eqnarray*}
-(d^c+\delta^c)\Omega^{ev}_Le_1e_2=m\Omega^{ev}_Re_0,  \\ -(d^c+\delta^c)\Omega^{ev}_Re_1e_2=m\Omega^{ev}_Le_0.
\end{eqnarray*}
 \end{proposition}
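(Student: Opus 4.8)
The plan is to follow the pattern of Propositions~\ref{5} and \ref{6}: reduce the claim to an intertwining identity for the chiral operator $\ast_5$, but one that also moves the right Clifford factors $e_1e_2$ and $e_0$ of the discrete Hestenes equation (\ref{eq:20}) past $\ast_5$. Write $D=d^c+\delta^c$. Proposition~\ref{2} gives $D\Omega=\sum_{\mu}e_\mu\Delta_\mu\Omega$, from which two facts follow at once: \emph{(A)} $D$ commutes with right multiplication by any constant form, $D(\Omega c)=(D\Omega)c$, since $\Delta_\mu$ acts componentwise and the Clifford product is vertex-local; and \emph{(B)} left multiplication by $e$ anticommutes with $D$, $D(e\Omega)=-e\,D\Omega$, since $e$ anticommutes with every $e_\mu$ (the fact already used in Proposition~\ref{7}). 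Beyond these I will only need the Clifford identities $e_1e_2e_2e_1=e_2e_1e_1e_2=x$ and $e_0e_2e_1=e_2e_1e_0$, all immediate from Proposition~\ref{1}.

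Using $\ast_5\Omega=e\Omega e_2e_1$ together with (A), (B) and $e_2e_1e_1e_2=x$ one gets, for every $\Omega\in K(4)$,
\begin{equation*}
D(\ast_5\Omega)\,e_1e_2=D(e\Omega\, e_2e_1)\,e_1e_2=D(e\Omega)\,e_2e_1e_1e_2=D(e\Omega)=-e\,D\Omega .
\end{equation*}
Now let $\Omega^{ev}$ solve (\ref{eq:20}), i.e.\ $-D\Omega^{ev}e_1e_2=m\Omega^{ev}e_0$. Right-multiplying by $e_2e_1$ and using $e_1e_2e_2e_1=x$ gives $D\Omega^{ev}=-m\Omega^{ev}e_0e_2e_1$; multiplying this on the left by $-e$ and using $e_0e_2e_1=e_2e_1e_0$ turns it into $-e\,D\Omega^{ev}=m(\ast_5\Omega^{ev})e_0$. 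Comparison with the previous display yields the key relation
\begin{equation*}
D(\ast_5\Omega^{ev})\,e_1e_2=m(\ast_5\Omega^{ev})e_0 ,
\end{equation*}
i.e.\ $\ast_5$ carries a solution of the $+m$ discrete Hestenes equation to a solution of the $-m$ one, the analogue of the chirality flip in Proposition~\ref{5}.

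Finally, since $\Omega^{ev}_{L}=\tfrac12(\Omega^{ev}-\ast_5\Omega^{ev})$ and $\Omega^{ev}_{R}=\tfrac12(\Omega^{ev}+\ast_5\Omega^{ev})$, and $D$ is linear and, by (A), commutes with the right factor $e_1e_2$, adding and subtracting (\ref{eq:20}) and the key relation gives
\begin{equation*}
-D\Omega^{ev}_{L}e_1e_2=\tfrac12 m\Omega^{ev}e_0+\tfrac12 m(\ast_5\Omega^{ev})e_0=m\Omega^{ev}_{R}e_0 ,
\end{equation*}
and likewise $-D\Omega^{ev}_{R}e_1e_2=m\Omega^{ev}_{L}e_0$, which is the assertion. (As noted before the proposition, $\ast_5$ preserves $K^{ev}(4)$ and real-valuedness, so $\Omega^{ev}_{L},\Omega^{ev}_{R}$ are genuine real even forms.) The only delicate point is the bookkeeping of noncommutativity: one must keep in mind that $D$ \emph{commutes} with right multiplication by constants but only \emph{anticommutes} with left multiplication by $e$, and apply each of $e_1e_2e_2e_1=x$, $e_2e_1e_1e_2=x$, $e_0e_2e_1=e_2e_1e_0$ on the correct side; beyond that the computation is routine and structurally duplicates Propositions~\ref{5} and \ref{6}.
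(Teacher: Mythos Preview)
Your proof is correct and follows essentially the same route as the paper: both rest on the anticommutation of $\ast_5$ with $d^c+\delta^c$ (Proposition~\ref{8}) and then split the Hestenes equation along $P^*_{L},P^*_{R}$. Your version is in fact more explicit than the paper's, which just says ``the proof repeats the proof of Proposition~\ref{5}'' after recording $(d^c+\delta^c)P^*_L=P^*_R(d^c+\delta^c)$; you spell out the Clifford identities ($e_1e_2e_2e_1=e_2e_1e_1e_2=x$, $e_0e_2e_1=e_2e_1e_0$) needed to move the right factors $e_1e_2$ and $e_0$ through $\ast_5$, which the paper leaves implicit.
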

 \begin{proof}
 \smartqed
Using  (\ref{eq:35}) and (\ref{eq:37}) we obtain
\begin{equation*}
(d^c+\delta^c)P^*_L\Omega=P^*_R(d^c+\delta^c)\Omega, \quad (d^c+\delta^c)P^*_R\Omega=P^*_L(d^c+\delta^c)\Omega
\end{equation*}
for any $\Omega\in K(4)$.
Therefore the proof repeats the proof of  Proposition~\ref{5}.
\qed
\end{proof}
 Let us consider the parity operation   $P: K^r(4)\rightarrow K^r(4)$ defined by
\begin{equation}\label{eq:38}
P\overset{r}{\omega}=e_0\overset{r}{\omega} e_0,
\end{equation}
where $\overset{r}{\omega}\in K^r(4)$ and $e_0\in K^1(4)$ is given by (\ref{eq:17}). It is clear that $P^2\overset{r}{\omega}=\overset{r}{\omega}$. But the second statement of Proposition~\ref{7} is not true.
The parity operation (\ref{eq:38}) changes the chirality  of discrete forms in the following way.
\begin{proposition}\label{11}
  For any form $\Omega\in K(4)$ we have
  \begin{equation}\label{eq:39}
P(P^*_L\Omega)=P^*_R(P\Omega),  \qquad  P(P^*_R\Omega)=P^*_L(P\Omega),
\end{equation}
where $P^*_L$ and $P^*_R$ are given by (\ref{eq:37}).
 \end{proposition}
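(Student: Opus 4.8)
The plan is to reduce both identities in (\ref{eq:39}) to the single anticommutation relation
\begin{equation*}
P\ast_5 = -\,\ast_5 P \qquad \text{on } K(4).
\end{equation*}
Granting this, and using that $P$ acts linearly together with $P^*_L=\frac{1}{2}(1-\ast_5)$ and $P^*_R=\frac{1}{2}(1+\ast_5)$, one computes
\begin{equation*}
P(P^*_L\Omega)=\tfrac{1}{2}\big(P\Omega-P\ast_5\Omega\big)=\tfrac{1}{2}\big(P\Omega+\ast_5 P\Omega\big)=P^*_R(P\Omega),
\end{equation*}
and the second identity of (\ref{eq:39}) follows in exactly the same way with the roles of the two signs interchanged.

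It therefore remains to establish $P\ast_5=-\ast_5 P$, and by linearity it is enough to check this on a homogeneous form $\overset{r}{\omega}\in K^r(4)$. Using the definitions (\ref{eq:33}) and (\ref{eq:38}) together with the associativity of the Clifford product on $K(4)$ — which lets us regroup the unit forms around $\overset{r}{\omega}$ without moving $\overset{r}{\omega}$ itself — I would write
\begin{equation*}
P(\ast_5\overset{r}{\omega})=e_0\,(e\,\overset{r}{\omega}\,e_2e_1)\,e_0=(e_0e)\,\overset{r}{\omega}\,(e_2e_1e_0),
\end{equation*}
\begin{equation*}
\ast_5(P\overset{r}{\omega})=e\,(e_0\,\overset{r}{\omega}\,e_0)\,e_2e_1=e\,e_0\,\overset{r}{\omega}\,e_0\,e_2e_1.
\end{equation*}
Now one invokes the anticommutation relations in (\ref{eq:23}) and in the proof of Proposition~\ref{7}: since $e\in K^4(4)$ anticommutes with every $e_\mu\in K^1(4)$ we have $e_0e=-ee_0$, and since $e_0$ anticommutes with both $e_1$ and $e_2$ we have $e_2e_1e_0=e_0e_2e_1$ (two transpositions, hence no sign). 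Substituting these,
\begin{equation*}
P(\ast_5\overset{r}{\omega})=(-ee_0)\,\overset{r}{\omega}\,(e_0e_2e_1)=-\,e\,e_0\,\overset{r}{\omega}\,e_0\,e_2e_1=-\,\ast_5(P\overset{r}{\omega}),
\end{equation*}
which is the desired relation.

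The argument is essentially a sign count, so I do not expect a genuine obstacle; the only point to watch is the bookkeeping — associativity of the Clifford product (so that the factors $e_0$, $e$ and $e_2e_1$ may be freely reassociated around $\overset{r}{\omega}$), and that the net sign is $-1$, arising from a single flip $e_0e=-ee_0$ combined with the trivial $+1$ from commuting $e_0$ past the two factors $e_2,e_1$. Note that only the first statement of Proposition~\ref{7} and the anticommutation of $e$ with the $e_\mu$ enter here; the second statement of Proposition~\ref{7} fails for $P$, as already remarked, which is precisely why $P$ exchanges the two chiral components $P^*_L\Omega$ and $P^*_R\Omega$ rather than preserving them.
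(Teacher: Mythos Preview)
Your proof is correct and follows essentially the same route as the paper: the paper's one-line argument observes that $e_0$ commutes with $e_2e_1$ and anticommutes with $e$, which is exactly the pair of facts you use to obtain $P\ast_5=-\ast_5 P$, and the rest is the linear algebra you have written out explicitly.
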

 \begin{proof}
 \smartqed
 Since $e_0$ commutes with $e_2e_1$ and anticommutes with $e$ it follows immediately.
\qed
\end{proof}
Decompose an even inhomogeneous form  $\Omega^{ev}\in K(4)$ as follows
\begin{equation*}
\Omega^{ev}=\Omega^{ev}_++\Omega^{ev}_-,
\end{equation*}
 where $\Omega^{ev}_+$ commutes with $e_0$ and   $\Omega^{ev}_-$ anticommutes with it, i.e.,
 \begin{equation}\label{eq:40}
e_0\Omega^{ev}_{\pm}=\pm\Omega^{ev}_{\pm}e_0.
\end{equation}
\begin{proposition}\label{12}
  Let  $\Omega^{ev}_{\pm R}=P^*_R\Omega^{ev}_{\pm}$ and   $\Omega^{ev}_{\pm L}=P^*_L\Omega^{ev}_{\pm}$. Then we have
  \begin{eqnarray*}
P\Omega^{ev}_{+ R}=\Omega^{ev}_{+L},  \qquad  P\Omega^{ev}_{- R}=-\Omega^{ev}_{-L}, \\
P\Omega^{ev}_{+L}=\Omega^{ev}_{+R},  \qquad  P\Omega^{ev}_{-L}=-\Omega^{ev}_{-R}.
\end{eqnarray*}
 \end{proposition}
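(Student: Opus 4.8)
The plan is to obtain all four identities at once from Proposition~\ref{11} combined with a single elementary observation about how the parity operation $P$ acts on the $\pm$-graded components $\Omega^{ev}_{\pm}$.

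First I would record that $P$ acts as $\pm\,\mathrm{id}$ on $\Omega^{ev}_{\pm}$. Indeed, by the defining relation (\ref{eq:40}) we have $e_0\Omega^{ev}_{\pm}=\pm\Omega^{ev}_{\pm}e_0$, so from the definition (\ref{eq:38}) of $P$ and the Clifford relation $e_0e_0=x$ (Proposition~\ref{1} with $g_{00}=1$, or rule~(b)) it follows that
\begin{equation*}
P\Omega^{ev}_{\pm}=e_0\Omega^{ev}_{\pm}e_0=\pm\,\Omega^{ev}_{\pm}e_0e_0=\pm\,\Omega^{ev}_{\pm}.
\end{equation*}

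Next I would feed this into Proposition~\ref{11}. Since $P$ is linear, applying (\ref{eq:39}) with $\Omega=\Omega^{ev}_{\pm}$ yields
\begin{equation*}
P\Omega^{ev}_{\pm R}=P\big(P^*_R\Omega^{ev}_{\pm}\big)=P^*_L\big(P\Omega^{ev}_{\pm}\big)=\pm\,P^*_L\Omega^{ev}_{\pm}=\pm\,\Omega^{ev}_{\pm L},
\end{equation*}
\begin{equation*}
P\Omega^{ev}_{\pm L}=P\big(P^*_L\Omega^{ev}_{\pm}\big)=P^*_R\big(P\Omega^{ev}_{\pm}\big)=\pm\,P^*_R\Omega^{ev}_{\pm}=\pm\,\Omega^{ev}_{\pm R},
\end{equation*}
which are precisely the four asserted equalities (taking the upper sign gives the $+$ cases, the lower sign the $-$ cases).

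I do not expect a genuine obstacle here: the real content has already been packaged into Proposition~\ref{11} (which rests on $e_0$ commuting with $e_2e_1$ and anticommuting with $e$), and what remains is purely a matter of tracking signs. The only step needing a line of justification is the identity $P\Omega^{ev}_{\pm}=\pm\Omega^{ev}_{\pm}$ above; everything else is substitution.
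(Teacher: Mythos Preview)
Your proof is correct and follows essentially the same route as the paper's: the paper's proof is the single chain
\[
P\Omega^{ev}_{+R}=P(P^*_R\Omega^{ev}_{+})=P^*_L(P\Omega^{ev}_{+})=P^*_L(e_0\Omega^{ev}_{+}e_0)=P^*_L(\Omega^{ev}_{+}e_0e_0)=P^*_L\Omega^{ev}_{+}=\Omega^{ev}_{+L},
\]
which is exactly your two steps (Proposition~\ref{11} plus $P\Omega^{ev}_{\pm}=\pm\Omega^{ev}_{\pm}$) written inline, with the remaining cases declared analogous. Your version merely isolates the identity $P\Omega^{ev}_{\pm}=\pm\Omega^{ev}_{\pm}$ as a preliminary observation and then handles both signs at once, which is a cosmetic reorganisation rather than a different argument.
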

\begin{proof}
 \smartqed
By (\ref{eq:38})--(\ref{eq:40}),  we obtain
\begin{equation*}
P\Omega^{ev}_{+ R}=P(P^*_R\Omega^{ev}_{+})=P^*_L(P\Omega^{ev}_{+})=P^*_L(e_0\Omega^{ev}_{+}e_0)=P^*_L(\Omega^{ev}_{+}e_0e_0)=P^*_L\Omega^{ev}_{+}=\Omega^{ev}_{+L}.
\end{equation*}
The same proof remains valid for all other cases.
\qed
\end{proof}

\section{Discrete plane wave solutions and spin eigenstates}
\label{sec:4}

Discrete versions of the plane wave solutions to discrete Joyce and Hestenes equations are constructed in \cite{S5}  and \cite{S6}.
In this section, we study spin properties of these solutions in the case of the discrete Joyce equation and discus how the chirality is realized for spin eigenstates in our discrete model.
Recall a discrete version of the  general  plane wave solution  for the Joyce equation (see for details \cite{S5}). Let $\psi\in K^0(4)$ and let
\begin{equation*}\label{eq:}
 \psi=\sum_k(ip_0+1)^{k_0}(ip_1+1)^{k_1}(ip_2+1)^{k_2}(ip_3+1)^{k_3}x^k,
 \end{equation*}
 where $i$ is the usual complex unit, $p_\mu\in\mathbb{R}$  and $k=(k_0,k_1,k_2,k_3)$ is a multi-index.
 Let  $A$ be the even inhomogeneous form given by
 \begin{eqnarray*}\label{}
A=a_1((m-p_0)x+p_1e_{01}+p_2e_{02}+p_3e_{03})\nonumber\\+a_2((m-p_0)e_{12}+p_2e_{01}-p_1e_{02}+p_3e)\nonumber \\
  +a_3((m-p_0)e_{13}+p_3e_{01}-p_1e_{03}+p_2e)\nonumber\\
    +a_4((m-p_0)e_{23}+p_3e_{02}-p_2e_{03}+p_1e),
 \end{eqnarray*}
 where
    $p_0=\pm\sqrt{m^2+p_1^2+p_2^2+p_3^2}$, \  $a_\mu=\frac{\alpha_\mu}{m-p_0}$ \ and $\alpha_\mu$ is an arbitrary complex number 
    for \ $\mu=1,2,3,4$. \
  Here the even unit forms \ $x\in K^0(4)$, \ $e\in K^4(4)$ \ and \ $e_{\mu\nu}\in K^2(4)$ are given by (\ref{eq:17}).
Then the most general plane wave solution of Eq.~(\ref{eq:21}) is
 \begin{equation}\label{eq:41}
  \Omega^{ev}=A\psi.
 \end{equation}
  Let consider a  particular case of (\ref{eq:41}), namely  $p_2=p_3=0$. This situation corresponds to one in the continuum case in which the plane wave solution is propagating along only one axis, e.g., $x_1$.
 In the continuum case, such solutions for a Dirac generalized bivector equation are described in \cite{J}.
Then we have
\begin{equation}\label{eq:42}
 \psi=\sum_k(ip_0+1)^{k_0}(ip_1+1)^{k_1}x^k
 \end{equation}
and
\begin{eqnarray}\label{eq:43}
A=a_1((m-p_0)x+p_1e_{01})+a_2((m-p_0)e_{12}-p_1e_{02})\nonumber \\
  +a_3((m-p_0)e_{13}-p_1e_{03})+a_4((m-p_0)e_{23}+p_1e).
 \end{eqnarray}
  Let us introduce the following constant 2-forms
\begin{equation}\label{eq:44}
   S_1=i\frac{1}{2}e_{23}, \qquad S_2=-i\frac{1}{2}e_{13}, \qquad S_3=i\frac{1}{2}e_{12}.
 \end{equation}
 By definition, we have  $e_{12}e_{12}=e_{13}e_{13}=e_{23}e_{23}=-x$  and one may easy calculate that
  $S_1^2+S_2^2+S_3^2=\frac{1}{2}(\frac{1}{2}+1)x$.
Hence similarly to the continuum case the forms (\ref{eq:44}) can be interpreted  as  spin  operators for our discrete model and  spin eigenstates of $\pm\frac{1}{2}$ along the direction of propagation can be described for the solution (\ref{eq:41}), where $\psi$ and $A$ are given by (\ref{eq:42}) and (\ref{eq:43}).

An easy computation shows that the equations
   $S_2A=\frac{1}{2}A$  and  $S_3A=\frac{1}{2}A$,
 where $A$ is given by (\ref{eq:43}), have only trivial solutions, i.e., $a_1=a_2=a_3=a_4=0$.
 However, the equation $S_1A=\frac{1}{2}A$ has an non-trivial solution.
 Indeed, applying the spin operator $S_1$ to (\ref{eq:43}) we obtain
 \begin{eqnarray}\label{eq:45}
S_1A=i\frac{1}{2}\big(a_1(m-p_0)e_{23}+a_1p_1e+a_2(m-p_0)e_{13}-a_2p_1e_{03}\nonumber \\
  -a_3(m-p_0)e_{12}+a_3p_1e_{02}-a_4(m-p_0)x-a_4p_1e_{01}\big).
 \end{eqnarray}
Combining (\ref{eq:45}) with  (\ref{eq:43}) we conclude that $S_1A=\frac{1}{2}A$ if and only if $a_1=-ia_4$ and $a_2=-ia_3$.
It follows that $A$ can be represented as
\begin{equation}\label{eq:46}
   A=a_1A_1+a_2A_2,
 \end{equation}
where
\begin{eqnarray}\label{eq:47}
A_1=(m-p_0)x+p_1e_{01}+i(m-p_0)e_{23}+ip_1e,\nonumber \\
 A_2=(m-p_0)e_{12}-p_1e_{02}+i(m-p_0)e_{13}-ip_1e_{03},
 \end{eqnarray}
 and $a_1, a_2$ are arbitrary constant. Since $\psi$ is a 0-form we have
 $S_1\Omega^{ev}=\frac{1}{2}\Omega^{ev}$, where $\Omega^{ev}$ is the plane wave solution (\ref{eq:41}) and $A$ is given by (\ref{eq:46}). On  other words,  $\Omega^{ev}$  is an eigenstate corresponding to the eigenvalue $\frac{1}{2}$ of the spin operator $S_1$.

It is clear that if $A\psi$ is a solution of the discrete Joyce equation then $\bar{A}\psi$, where $\bar{A}$ denotes the complex conjugate of $A$,  is also a solution.
It can also be seen that $\bar{A}=\bar{a}_1\bar{A}_1+\bar{a}_2\bar{A}_2$ satisfies the equation $S_1\bar{A}=-\frac{1}{2}\bar{A}$.
Hence, similarly as in continuum case \cite{J} the solutions $A\psi$ and $\bar{A}\psi$, where  $\psi$ and  $A$ are given by (\ref{eq:42}) and (\ref{eq:46}), can be interpreted as spin up and spin down solutions correspondingly.

It should be noted that the chirality is captured for the spin solutions described above.
Applying the projectors (\ref{eq:25}) to the forms $A_1$ and $A_2$ given by (\ref{eq:47}) one can calculate
\begin{eqnarray*}\label{}
P_RA_1=\frac{1}{2}(m-p_0+p_1)(x+e_{01}+ie_{23}+ie),\nonumber \\
P_LA_1=\frac{1}{2}(m-p_0-p_1)(x-e_{01}+ie_{23}-ie),\nonumber \\
 P_RA_2=\frac{1}{2}(m-p_0+p_1)(e_{12}-e_{02}+ie_{13}-ie_{03}),\nonumber \\
 P_LA_2=\frac{1}{2}(m-p_0-p_1)(e_{12}+e_{02}+ie_{13}+ie_{03}).\nonumber
 \end{eqnarray*}
Thus we have the following two left and two right chiral states
\begin{equation}\label{eq:48}
 \Omega^{ev}_{1L}=P_LA_1\psi, \quad \Omega^{ev}_{2L}=P_LA_2\psi, \quad \Omega^{ev}_{1R}=P_RA_1\psi, \quad \Omega^{ev}_{2R}=P_RA_2\psi,
 \end{equation}
 where $\psi$ is given by (\ref{eq:41}). Obviously, as has already been described in Sect.~\ref{sec:3} the forms (\ref{eq:48}) satisfy Eqs.~(\ref{eq:31}) and (\ref{eq:32}).


\begin{thebibliography}{99.}%
% and use \bibitem to create references.
%
% Use the following syntax and markup for your references if
% the subject of your book is from the field
% "Mathematics, Physics, Statistics, Computer Science"
%
% Contribution
\bibitem{B} Baylis, W.~E. ed.: Clifford (Geometric) Algebra with Applications to
Physics, Mathematics, and Engineering.  Birkh\"{a}user, Boston (1996)

\bibitem{B1} Baylis, W.~E.: Comment on `Dirac theory in spacetime algebra'.  J. Phys. A: Math. Gen.   \textbf{35},  4791--4796  (2002)
\bibitem{Beauce}
 de Beauc\'{e}, V.,  Sen, S.,   Sexton, J.C.:   Chiral Dirac fermions on the lattice using geometric discretisation.
 Nucl. Phys. B (Proc. Suppl.) \textbf{129--130}, 468--470 (2004)


\bibitem{Dezin} Dezin,~A.~A.: Multidimensional Analysis and Discrete Models. CRC Press, Boca Raton (1995)

\bibitem{FKS}  Faustino, N.,  K\"{a}hler, U., Sommen,  F.: Discrete Dirac operators in Clifford analysis.
Adv. Appl. Cliff. Alg. \textbf{17}(3),  451--467 (2007)

\bibitem{F2}  Faustino, N.: Solutions for the Klein-Gordon and Dirac equations on the lattice based on Chebyshev polynomials.
Complex Anal. Oper. Theory \textbf{10}(2), 379--399 (2016)

\bibitem{F3} Faustino, N.:  A conformal group approach to the Dirac-Kahler system on the lattice.
   Math. Methods Appl. Sci.  \textbf{40}(11),  4118--4127  (2017)

\bibitem{H1} Hestenes, D.: Real Spinor Fields.  Journal of Mathematical Physics.
 \textbf{8}(4), 798–-808 (1967)

\bibitem{H2} Hestenes, D.: Spacetime Algebra. Gordon and Breach,  New York (1966)

\bibitem{J}
Joyce, W. P.: Dirac theory in spacetime algebra: I. The generalized
bivector Dirac equation.
J. Phys. A: Math. Gen. \textbf{34}, 1991--2005
 (2001),
\bibitem{JM}
 Joyce W. P.,  Martin, J. G.: Equivalence of Dirac formulations.
J. Phys. A: Math. Gen. \textbf{35},   4729--4736 (2002)


\bibitem{Kahler}
 K\"{a}hler, E.:  Der innere differentialk\"{u}l.  Rendiconti di Matematica \textbf{21}(3--4), 425-523 (1962)

\bibitem{Kanamori}
 Kanamori I.,  Kawamoto N.:   Dirac-K\"{a}hler fermion from Clifford product with noncommutative differential form on a lattice.
  Int. J. Mod. Phys. A \textbf{19}(5)  695--736 (2004)



\bibitem{Rabin}
Rabin, J.M.: Homology theory of lattice fermion doubling.  Nucl. Phys. B. \textbf{201}(2),  315--332 (1982)

\bibitem{S1} Sushch, V.: A discrete model of the Dirac-K\"{a}hler equation. Rep. Math. Phys. \textbf{73}(1),  109--125 (2014)
\bibitem{S2} Sushch, V.: On the chirality of a discrete Dirac-K\"{a}hler equation.  Rep. Math. Phys. \textbf{76}(2),  179--196 (2015)
\bibitem{S3} Sushch, V.:   Discrete Dirac-K\"{a}hler equation and its formulation in algebraic form. Pliska Stud. Math. \textbf{26}, 225--238 (2016)
\bibitem{S4} Sushch, V.:  Discrete Dirac-K\"{a}hler and Hestenes equations.	In: Pinelas, S. et al. (eds.) Differential and difference
equations with applications. ICDDEA 2015. Springer Proceedings in Mathematics and Statistics  \textbf{164}, 433--442. Springer, Cham (2016)

\bibitem{S5} Sushch, V.: Discrete versions of some Dirac type equations and plane wave solutions. In: Pinelas, S. et al. (eds.) Differential and difference
equations with applications. ICDDEA 2017. Springer Proceedings in Mathematics and Statistics \textbf{230}, 463--475. Springer, Cham (2018)

\bibitem{S6} Sushch, V.:  A discrete Dirac-K\"{a}hler equation using a geometric discretisation scheme, Adv. Appl. Clifford Alg.  28:72,  1--17 (2018)

\bibitem{Vaz}  Vaz, J.:  Clifford-like calculus over lattices. Adv. Appl. Clifford Alg. \textbf{7}(1), 37--70 (1997)

\bibitem{Watterson}  Watterson, S.: The chiral and flavour projection of Dirac-K\"{a}hler fermions in the geometric
discretization.  Int. J. Geom. Methods Mod. Phys. \textbf{5}(3), 345--362 (2008)



%
\end{thebibliography}
\end{document}